\newtheorem{problem}{Problem}
\newtheorem{example}{Example}
\newtheorem{proposition}{Proposition}
\newtheorem{definition}{Definition}
\newtheorem{remark}{Remark}
\newcommand{\argmin}{\mathop{\mathrm{argmin}}}          %
\newcommand{\blfootnote}[1]{
  \begingroup
  \renewcommand\thefootnote{}\footnote{#1}
  \addtocounter{footnote}{-1}
  \endgroup
}
\newcommand{\safe}{\mathrm{safe}}
\newcommand{\desire}{\mathrm{des}}
\newcommand{\execute}{\mathrm{exec}}
\newcommand{\data}{\mathrm{data}}
\newcommand{\gammavec}{\boldsymbol{\gamma}}
\newcommand{\lon}{\mathrm{lon}}
\newcommand{\lat}{\mathrm{lat}}
\title{\LARGE \bf
Learning responsibility allocations for multi-agent interactions:\\A differentiable optimization approach with control barrier functions
}
\author{Isaac Remy$^{1}$, David Fridovich-Keil$^{2}$, and Karen Leung$^{1,3}$%
\thanks{$^{1}$University of Washington, Department of Aeronautics and Astronautics, $^{2}$University of Texas, Austin, Department of Aerospace Engineering and Engineering Mechanics, $^3$NVIDIA.
        {\tt\small \{iremy,kymleung\}@uw.edu, dfk@utexas.edu}.
        I. Remy was supported by the AI@UW Seed Grant.
        D. Fridovich-Keil was supported by a National Science Foundation CAREER award under Grant No. 2336840.}%
}
\begin{document}

\maketitle
\thispagestyle{empty}
\pagestyle{empty}

\begin{abstract}

From autonomous driving to package delivery, ensuring safe yet efficient multi-agent interaction is challenging as the interaction dynamics are influenced by hard-to-model factors such as social norms and contextual cues. 
Understanding these influences can aid in the design and evaluation of socially-aware autonomous agents whose behaviors are aligned with human values.
In this work, we seek to codify factors governing safe multi-agent interactions via the lens of \textit{responsibility}, i.e., an agent's willingness to deviate from their desired control to accommodate safe interaction with others. 
Specifically, we propose a data-driven modeling approach based on control barrier functions and differentiable optimization that efficiently learns agents' responsibility allocation from data.
We demonstrate on synthetic and real-world datasets that we can obtain an interpretable and quantitative understanding of how much agents adjust their behavior to ensure the safety of others given their current environment. \blfootnote{Project page: \href{https://isremy.github.io/responsibility-website/}{https://isremy.github.io/responsibility-website/}}

\end{abstract}

\section{Introduction}
Safe navigation and collision avoidance with others come naturally to humans, but so much of our decision-making is governed by \textit{social norms}, which depend on hard-to-model factors such as context and interaction history.
For example, how do we precisely quantify factors that influence a driver's decision whether to slow down for an overtaking vehicle?
On the one hand, end-to-end approaches are powerful in capturing complex and nuanced interaction dynamics but lack interpretability in explaining the interactions \cite{NayakantiAlRfouEtAl2023,IvanovicLeungEtAl2020,MizutaLeung2024}. 
On the other hand, handcrafted model-based approaches, while fully interpretable, may miss capturing nuanced corner-cases and subtle interactions \cite{SadighSastryEtAl2016c,MavrogiannisAlves-OliveraEtAl2021}.
Naturally, data-driven model-based methods present an attractive middle-ground for providing interpretability while capturing nuanced interactions.

In this work, we present a data-driven model-based approach to modeling social norms underlying multi-agent interactions. 
Specifically, we model social norms through the lens of \textit{responsibility}, which we define as an agent's willingness to deviate from its desired control to accommodate safe interactions with others (visualized in Fig. \ref{fig:hero}), and present a computationally efficient technique to infer agents' responsibility allocations given information about their surroundings.
Core to our approach is the combination of Control Barrier Functions (CBF) and differentiable optimization which provides interpretable inductive bias and the capability to extract interpretable quantities governing multi-agent interactions from data. Our responsibility learning framework is both applicable to guiding socially-aware robot policy construction and offline evaluation and data analysis.

\noindent\textbf{Contributions.} The contributions of this paper are four-fold: \textbf{(i)} We propose a novel mathematical formalization of describing responsibility allocations for multi-agent interactions, which is based upon Control Barrier Functions.
\textbf{(ii)} We present a computationally efficient technique to learn agents' responsibility allocations from data. 
Computational efficiency is afforded by combining differentiable optimization techniques with modern deep learning and automatic differentiation tools.
\textbf{(iii)} We introduce the concept of \textit{symmetric responsibility} and provide a tractable approach to learning symmetric responsibility allocation models from data. We demonstrate the benefit of symmetric responsibility in improving data efficiency.
\textbf{(iv)} We demonstrate the efficacy of our responsibility learning method on synthetic and human interaction data and highlight its capability to provide interpretable insights into the interaction. 

\begin{figure}
    \centering
    \includegraphics[width=1.0\linewidth]{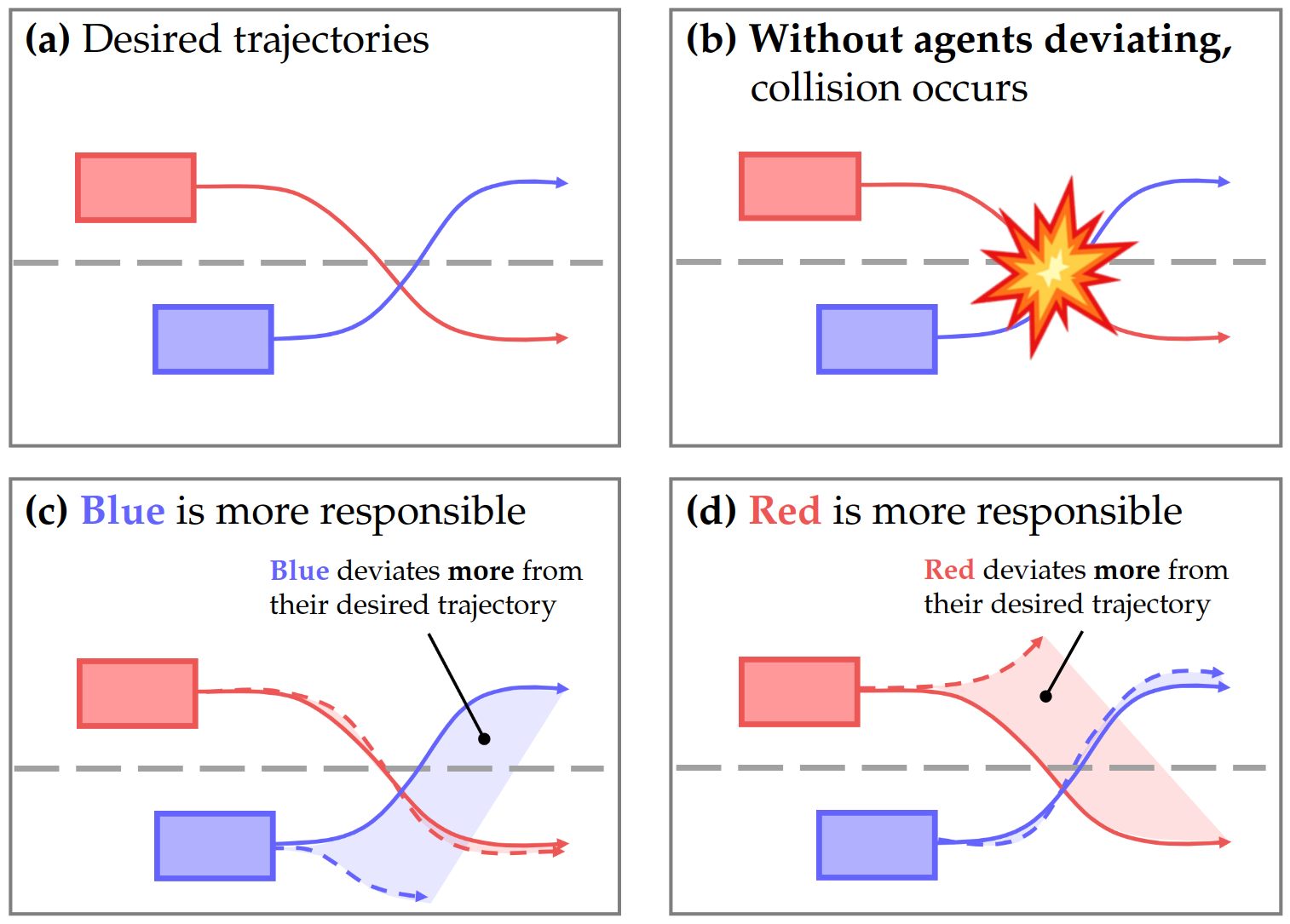}
    \caption{In a) and b), two cars are swapping lanes on a highway, but their desired controls lead to collision. In c) and d), we see how the agents may deviate from their ideal trajectories, according to two different responsibility allocations. In this work we wish to solve the \textit{inverse} problem: how do we infer these allocations from data?}
    \label{fig:hero}
    \vspace{-7mm}
\end{figure}

\section{Related work}

Recent works consider learning and instilling social awareness into autonomous planning algorithms to improve coordination between agents and thus boost the overall safety and efficiency of the multi-agent system \cite{SchwartingPiersonEtAl2019,ToghiValienteEtAl2021,ToghiValienteEtAl2022,SunZhanEtAl2018}. 
The idea is to include another agent's objective function into the autonomous agent's planning objective. The weighting between the two objective terms, referred to as the social value orientation (SVO), determines the output behavior, ranging from prosocial, egotistic, altruistic, competitive, and masochistic.
The SVO weightings can be learned from data \cite{SchwartingPiersonEtAl2019} to understand how much an agent considers the welfare of others.
Although these works do not explicitly discuss responsibility, SVO weightings can potentially encode responsibility as they describe how much one agent considers the welfare of others.
However, a few issues arise. First, the objectives can include other considerations beyond safety, making interpreting the SVO weightings challenging. Second, the SVO paradigm applies to only two agents, and it is unclear how it can extend to multiple agents. Third, it relies on knowledge of other agents' reward functions, which is often hard to obtain, especially for multi-agent settings.

There are a few recent works that explicitly consider the concept of responsibility in terms of multiple agents satisfying a shared collision avoidance constraint.
Optimal reciprocal collision avoidance (ORCA) is a popular collision avoidance strategy, and the original concept assumes agents share equal responsibility in avoiding the velocity obstacle\cite{VandenBergLinEtAl2008}.
Various works have changed the responsibility allocation, either via heuristics or based on agents' states \cite{YinLiuEtAl2019,LuoCaiEtAl2018,LuoCaiEtAl2022}. 
More recently, \cite{GuoWangEtAl2021} proposes a variable responsibility optimal reciprocal collision avoidance (VR-ORCA) which computes a responsibility allocation that minimizes each agent's deviation from their preferred velocity. This choice of responsibility allocation results in what is more convenient for agents rather than based on what social norms would dictate. Although the concept of responsibility is explicitly modeled, the proposed formulation is not compatible with learning from data to mimic how humans naturally interact with one another. 
Lastly, similar to our proposed approach, \cite{CosnerChenEtAl2023} also frames responsibility allocations using CBFs but instead views responsibility as a parameter denoting how much each agent contributes to a joint CBF control constraint. In contrast, our responsibility parameter is a value describing how much agents deviate from their desired control---interesting future work could lie in more rigorous comparison of different responsibility formulations.

\section{Problem formulation}
Consider a system with $N$ agents where $x\in\mathcal{X}$ denotes the state describing all agents, $u = (u_1,u_2,...,u_N)\in\mathcal{U}_1\times\mathcal{U}_2\times...\times\mathcal{U}_N$ denote the control inputs of each agent, and $\dot{x} = f(x, u)$ denotes the dynamics of the multi-agent system. 
Let $g(x, u)\geq 0$ denote a collision avoidance constraint shared amonst all agents.
In real-world settings, humans take control actions to accomplish their desired task and avoid collision with one another.
But \textit{how much does each agent contribute to ensuring the collision avoidance constraint is satisfied?}
In other words, \textit{how responsible} is each agent in taking control actions that ensures that $g(x,u)\geq 0$ is satisfied.
Suppose $\pi_{1:N}:\mathcal{X} \rightarrow \mathcal{U}_1 \times \mathcal{U}_2\times...\times\mathcal{U}_N$ is an arbitrary policy describing how $N$ humans behave around each other.
\textit{How does $\pi$ encode each agent's responsibility allocation in satisfying collision avoidance constraints?}

While a human's true behavior policy is extremely complex, we seek to develop an interpretable lens on $\pi$ to describe each agent's responsibility allocation. 
Let $\gammavec \in \mathbb{R}^N$ denote the responsibility allocation for each agent. Then we seek to answer the following two questions: 
\textbf{(i) \textit{What is an interpretable model describing the dependency of $\pi$ on $\gammavec$?}
(ii) \textit{Given such a model, how can $\gammavec$ be tractably inferred from human interaction data?}}

\section{Defining responsibility allocations}
In this section, we define responsibility as an agent's inclination to deviate from their desired control to satisfy collision avoidance constraints.
At a high level, suppose agents have a desired control action they would like to take, e.g., maintain speed in their lane, or change lanes. Then, a \textit{more (less) responsible} agent is willing to deviate more (less) from their desired control to respect collision avoidance constraints. 
We model agents' willingness to deviate from their desired control action as an optimization problem based on a safety filter derived from a control barrier function (CBF).  Our use of CBFs specifically as a model for human safety is motivated by their simple but descriptive nature---CBFs let us describe human collision avoidance constraints concisely, intuitively, and with relatively minor assumptions.
We first introduce CBFs and describe their use as a safety filter, and then formally define responsibility allocations.

\subsection{Control barrier function (CBFs)}
Control Barrier Functions (CBFs) are mathematical tools used in control theory to ensure the safety of dynamic systems while achieving desired control objectives. 
Consider continuous-time dynamics, $\dot{x}=f(x,u)$ where $x\in\mathcal{X}\subset \mathbb{R}^n, \: u\in \mathcal{U} \subset \mathbb{R}^m$, and $f: \mathcal{X}\times \mathcal{U} \rightarrow \mathbb{R}^n$ is locally Lipschitz continuous. A CBF is defined as follows.

\begin{definition}[Control Barrier Function (CBF) \cite{AmesGrizzleEtAl2014}]
Given the aforementioned system dynamics, consider a continuously differentiable function $b: \mathcal{X} \rightarrow \mathbb{R}$ and a set $\mathcal{C} \subset \mathcal{X}$ where $\mathcal{C} = \{ x \in \mathcal{X} \mid b(x) \geq 0\}$. Then $b$ is a valid  CBF if there exists a class $\mathcal{K}_\infty$ function $\alpha$ such that,
\begin{equation}
    \sup_{u\in \mathcal{U}}  \: \nabla b(x)^Tf(x,u) + \alpha(b(x)) \geq 0,\quad  \forall x\in\mathcal{X}.
    \label{eq:cbf}
    \vspace{-5mm}
\end{equation}
\label{def:cbf}
\end{definition}

Intuitively, $b(\cdot)$ is a measure of safety, and a valid CBF ensures that there exists a feasible control preventing $b(x(t))$ from decreasing faster than a rate of $-\alpha(b(x))$ along system trajectories and that $\dot{b}(x(t))=0$ along the boundary of $\mathcal{C}$.
It follows that if $b(\cdot)$ is a valid CBF for a set $\mathcal{C}$, then any locally Lipschitz controller $k: \mathcal{X} \rightarrow \mathcal{U}$ satisfying $\nabla b(x)^Tf(x,u) + \alpha(b(x)) \geq 0$ renders the set $\mathcal{C}$ forward invariant \cite{AmesXuEtAl2017}.

\subsection{CBF safety filters}
Given the forward invariance property of CBFs, we can compute a set of control inputs that prevent the system from exiting $\mathcal{C}$.
The control set is simply the set of controls that satisfy the CBF inequality in \eqref{eq:cbf}, i.e.,
\vspace{-3mm}

{\small
\begin{equation}
\mathcal{U}^\safe(x; b, \alpha) = \lbrace u\in\mathcal{U} \mid \nabla b(x)^Tf(x,u) + \alpha(b(x)) \geq 0 \rbrace.
\label{eq:safe control set}
\end{equation}
}

\begin{problem}[CBF safety filter]
A safety filter will take the control from a \textit{desired} policy, $u^\desire = \pi^\desire(x)$, and project it into the safe control set $\mathcal{U}^\safe(x; b, \alpha)$ defined in \eqref{eq:safe control set}, i.e.
{\small
\begin{align*}
    u^\star = \argmin_{u\in\mathcal{U}} \: \| u - u^\desire\|_2^2\:\: \text{s.t.}  \:\: \nabla b(x)^Tf(x,u) + \alpha(b(x)) \geq 0.
\end{align*}
}
\label{prob:CBF safety filter}
\vspace{-5mm}
\end{problem}

\begin{remark}
If the dynamics are control affine, i.e., $\dot{x} = f(x) + g(x)u$ and $u\in\mathcal{U}$ are box constraints, i.e. $u_{\min} \leq u \leq u_{\max}$, then Prob.~\ref{prob:CBF safety filter} is a convex quadratic program (QP) and can be solved efficiently \cite{StellatoBanjacEtAl2017, TracyManchester2024}.
\label{re:QP}
\end{remark}

\begin{remark}
\label{rem:slack}
    For a valid CBF, Prob.~\ref{prob:CBF safety filter} is always feasible. However, finding a valid CBF is often nontrivial. Instead, a desired CBF can be chosen and slack variables can be introduced (demonstrated in Prob. \ref{prob:vanilla multiagent CBF safety filter}) to encourage agents to satisfy the CBF constraint as best as possible within the control limits. In the context of this work, the loss of guarantees is not an issue as we are analyzing how much deviation each agent makes from their desired control rather than directly synthesizing guaranteed safe control strategies.
\end{remark}

\begin{remark}
    Definition~\ref{def:cbf} assumes we have a relative degree 1 system. To account for higher relative degree systems (e.g., double integrator systems, with $b(x)$ a function only of position variables), we can employ high order CBFs \cite{XiaoBelta2021} which results in a slightly more complicated CBF inequality that remains linear in control for control affine systems.
    \label{re:high order cbf}
\end{remark}

\subsection{Responsibility allocations via a CBF safety filter lens}

In this work, we use CBF safety filters as models to describe \textit{multi-agent} interactions, allowing us to infer how agents cooperate to maintain safety.
The insights gained into multi-agent interactions can be used for offline (crash) analysis, measuring social acceptability of a policy, or to adapt an agent's behavior online.

Consider a \textit{multi-agent} control affine dynamical system where $\mathbf{x}=[x_1,...,x_N]$ is the joint state describing $N$ agents, and each agent has control input $u_i$.
\begin{equation}
    \dot{\mathbf{x}} = \tilde{f}(\mathbf{x}) + \sum_{i=1}^N g_i(\mathbf{x})u_i
    \label{eq:multiagent dynamical system}
\end{equation}
Given a CBF $b(\cdot)$ for the multiagent system, Prob.~\ref{prob:vanilla multiagent CBF safety filter} describes the direct application of the CBF safety filter problem (Prob.~\ref{prob:CBF safety filter}) onto the joint dynamics under $u^\desire_{1:N} = \pi^\desire_{1:N}(\mathbf{x})$.

\begin{problem}[Unweighted multiagent CBF safety filter]
{\small
\begin{align*}
    u^\execute_{1:N} = &\argmin_{u_1,...,u_N} \quad \sum_{i=1}^N\| u_i - u^\desire_i\|_2^2\\
     &\:\:\mathrm{s.t.}  \:\: \nabla b(\mathbf{x})^T\left[ \tilde{f}(\mathbf{x}) + \sum_{i=1}^N g_i(\mathbf{x})u_i\right] + \alpha(b(\mathbf{x})) \geq 0\\
      & \:\quad\quad u_1\in\mathcal{U}_1,..., u_N\in\mathcal{U}_N.
\end{align*}
}
\vspace{-7mm}
\label{prob:vanilla multiagent CBF safety filter}
\end{problem}

Notice that Prob.~\ref{prob:vanilla multiagent CBF safety filter} has \textit{equal} weighting in penalizing each agent for deviating from their desired control input, implying no agent is more or less inclined to ensure the safety constraint is satisfied.
We modify Prob.~\ref{prob:vanilla multiagent CBF safety filter} by adding a coefficient $\gammavec_i$ for each agent to influence the relative penalty in deviating away from the desired control.
As such, we propose the following definition of responsibility allocation.

\begin{definition}[Responsibility allocation]
Let there be $N$ agents interacting, and their joint dynamics are governed by \eqref{eq:multiagent dynamical system}. Let $b(\cdot)$ be a CBF for the joint system describing the collision set. Each agent $i \in \{1, 2, \dots, N\}$ has a desired control input $u_i^\desire$ that is projected into the safe control set via a projection mapping $\mathrm{proj}$.
The projection mapping $\mathrm{proj}$ depends on the corresponding state $\mathbf{x}$, CBF $b$, class $\mathcal{K}_\infty$ function $\alpha$, and the responsibility allocation vector $\gammavec=[\gammavec_1, \gammavec_2,..., \gammavec_N]$ with $\mathbf{1}^T\gammavec = 1$ and $0 \leq \gammavec_i \leq 1$, which determines how much deviation from the desired control each agent is willing to make to satisfy 
\eqref{eq:safe control set}.    
The projection mapping $\mathrm{proj}$ can be computed by solving the optimization problem described in Prob.~\ref{prob:responsible multiagent CBF safety filter} (differences from Prob.~\ref{prob:vanilla multiagent CBF safety filter} are highlighted in blue).
\vspace{1mm}
\begin{problem}[Responsible multiagent CBF safety filter]
{\small
\begin{align*}
    \mathrm{proj}&(u_{1:N}^\desire; \mathbf{x}, b, \alpha, \gammavec) \coloneqq  \\
    &\argmin_{u_1,...,u_N, {\color{blue}\epsilon}}  \: \sum_{i=1}^N   \left({\color{blue}\gamma_i} \| u_i - u^\desire_i\|_2^2 + {\color{blue}\beta_1 \|u_i\|_2^2} \right) + {\color{blue}\beta_2 \epsilon^2}\\
     &\mathrm{s.t.}   \quad \nabla b(\mathbf{x})^T  \left[  \tilde{f}(\mathbf{x}) + \sum_{i=1}^N g_i(\mathbf{x})u_i\right] + \alpha(b(\mathbf{x})) \geq{\color{blue}-\epsilon}\\
      &\qquad\:\: u_1\in\mathcal{U}_1,..., u_N\in\mathcal{U}_N\\
      & \qquad\:\: {\color{blue}\epsilon \geq 0}.
\end{align*}
}
\label{prob:responsible multiagent CBF safety filter}
\vspace{-5mm}
\end{problem}
\end{definition}

\begin{remark}
    The slack variable $\epsilon >0$ reflects the discussion in Remark \ref{rem:slack}. The regularization term $\|u_i\|_2^2$ ensures a unique solution for cases in which Agent $i$ bears total responsibility for constraint satisfaction, i.e. $\gammavec_i=0$. $\beta_{1,2}$ are scalar hyperparameters for regularization and $\epsilon$ weighting.
\end{remark}

\begin{example}[Two-agent 1D single integrator]
    \label{eg: 2-agent 1D single integrator}
    Consider a two-agent 1D single integrator system with CBF,
    \begin{align*}
        \begin{bmatrix}
            \dot{x}_1\\ \dot{x}_2
        \end{bmatrix} = \begin{bmatrix}
            u_1\\ u_2
        \end{bmatrix},\: b(x_1, x_2) = (x_1 - x_2)^2 - 1,\: \alpha(x) = x.
    \end{align*}
    Let $\gamma \in [0, 1]$ denote the responsibility allocation of Agent 1. 
    Fig.~\ref{fig:cbf filter} shows the solution to Prob.~\ref{prob:responsible multiagent CBF safety filter} for various values of $\gamma$.
    We see that when $\gamma=0$, Agent 1 takes full responsibility by deviating from its desired control value to satisfy the CBF constraint (green region) while Agent 2 has no deviation. As $\gamma$ increases, Agent 2 begins to deviate more and Agent 1 less, until at $\gamma=1$, Agent 2 is fully responsible.

    \begin{figure}[h]
        \centering
        \vspace{-3mm}
        \includegraphics[width=0.6\linewidth]{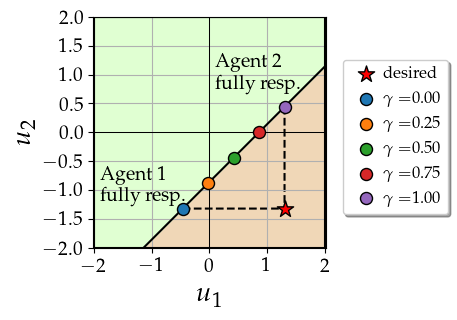}
        \caption{Solutions to the two-agent CBF filter problem with different responsibility allocation values for Agent 1. The green and orange regions are the sets of feasible and infeasible controls, respectively.}
        \label{fig:cbf filter}
        \vspace{-5mm}
    \end{figure}
\end{example}

\section{Inferring responsibility from data}
The previous section described how a choice of responsibility allocation $\gammavec$ influences each agent's deviation from their desired control input. But \textit{how do we choose responsibility allocation values?} 
Human expertise can qualitatively describe trends (e.g., the agent behind should yield to the agent in front), 
but data quantitatively informs what the responsibility allocation values should be.
To this end, we propose a data-driven inverse optimization approach to learn responsibility allocations from interaction data.

\subsection{Bi-level optimization formulation}
The goal is to find a value for $\gammavec$ such that the solution to Prob.~\ref{prob:responsible multiagent CBF safety filter} matches an interaction dataset as closely as possible.
Suppose we have a dataset describing $N$-agent interactions $\mathcal{D} = \lbrace (\mathbf{x}^{(k)}, u_{1:N}^{(k)}) \rbrace_{k=1,...,K}$, and a desired policy $\pi^\desire_{1:N}(\mathbf{x})$ producing the agents' desired control inputs $u^\desire_{1:N}$. For example, each agent may wish to ignore the others and follow a nominal trajectory.
Then, for a choice of CBF parameters $b$ and $\alpha$, we want a responsibility allocation vector $\gammavec$ such that $u^\data_{1:N} \approx \mathrm{proj}(\pi_{1:N}^\desire(\mathbf{x}^\data); \mathbf{x}^\data, b, \alpha, \gammavec)$ for some distance metric $\Delta$ (e.g., L1, L2, Huber loss).
Since the projection mapping $\mathrm{proj}$ is an optimization problem (see Prob.~\ref{prob:responsible multiagent CBF safety filter}), the resulting regression problem has a bi-level structure.

\begin{problem}[Responsibility allocation inference]

\begin{align*}
\min_{\gammavec} &\: \frac{1}{|\mathcal{D}|}\sum_{(\mathbf{x}^i, u_{1:N}^i) \in \mathcal{D}} \Delta( u_{1:N}^i, \tilde{u}_{1:N}^i(\gammavec))\\
\mathrm{s.t.} & \: \tilde{u}_{1:N}^i(\gammavec) = \underbrace{\mathrm{proj}(\pi_{1:N}^\desire(\mathbf{x}^i); \mathbf{x}^i, b, \alpha, \gammavec),}_{\text{See Prob.~\ref{prob:responsible multiagent CBF safety filter}}} \\
&\quad \mathbf{1}^T\gammavec = 1, \quad 0 \leq \gamma_i \leq 1.
\end{align*}
\label{prob:responsibility bilevel}
\vspace{-7mm}
\end{problem}
Prob.~\ref{prob:responsibility bilevel} consists of \textit{multiple} inner problems, one for each data point. 
This problem structure promotes the use of parallelization as an efficient solution method.
Next, we describe how we can efficiently solve Prob.~\ref{prob:responsibility bilevel}.

\subsection{Leveraging differentiable optimization}
\label{subsec:solving diff opt}

We exploit recent results in differentiable optimization \cite{AmosKolter2017,AgrawalAmosEtAl2019,AgrawalBarrattEtAl2020,AgrawalBarrattEtAl2021,TracyManchester2024} and automatic differentiation tools \cite{JAX2018,PaszkeGrossEtAl2017,RevelsLubinEtAl2016} to tractably solve Prob.~\ref{prob:responsibility bilevel} using gradient descent.
We first present the gradient descent algorithm and then discuss how this can be solved efficiently.

The procedure for taking a single gradient descent step is described in Alg.~\ref{alg:gradient descent}.
Line \ref{algline:gamma constraint} ensures that the constraints on $\gammavec$ are met rather than employing a projected gradient descent.
Line \ref{algline:vmap} describes solving many instances of Prob.~\ref{prob:responsible multiagent CBF safety filter} (i.e., batching the computation) and collating the solutions as $\tilde{U}_{1:N}$ which is used in computing the loss in line \ref{algline:loss}. 
Line \ref{algline:autodiff} performs the update step with step size $\delta$. The gradient $\frac{\partial \ell}{\partial \gammavec} = \frac{\partial \ell}{\partial \gammavec} \frac{\partial \gammavec}{\partial\widetilde{\gammavec}}$ requires differentiating through \textit{all} the instances of Prob.~\ref{prob:responsible multiagent CBF safety filter} with respect to $\gammavec$.

\begin{algorithm}[t]
\caption{Single gradient step to solve responsibility allocation inference}\label{alg:gradient descent}
\begin{algorithmic}[1]
\Require $\widetilde{\gammavec}$ (unnormalized responsibility allocation)
\Require $\delta$ (step size)
\Require $\mathcal{D} = \lbrace (\mathbf{x}^{(k)}, u_{1:N}^{(k)}) \rbrace_{k=1,...,K}$ (dataset)
\State $\gammavec = \mathrm{softmax}(\widetilde{\gammavec})$ (ensuring $\mathbf{1}^T\gammavec=1$ and $\gammavec\in[0,1]$) \label{algline:gamma constraint}
\State $\widetilde{U}_{1:N} = \mathrm{batch}(\tilde{u}_{1:N}(\gammavec, \mathbf{x}^\data); \mathcal{D})$ \label{algline:vmap}
\State $\ell(\gammavec) = \sum_{(u_{1:N}^\data, \tilde{u}_{1:N}) \in \mathcal{D}\times \widetilde{U}_{1:N}} \Delta( u_{1:N}^\data, \tilde{u}_{1:N})$ \label{algline:loss}
\State $\widetilde{\gammavec} \gets \widetilde{\gammavec} - \delta \frac{\partial \ell}{\partial \gammavec} \frac{\partial \gammavec}{\partial\widetilde{\gammavec}}$ \label{algline:autodiff}
\end{algorithmic}
\end{algorithm}
Referring back to Remark~\ref{re:QP}, if the dynamics are control affine, then  Prob.~\ref{prob:responsible multiagent CBF safety filter} is a quadratic program.
This means that (i) the problem is convex and can be efficiently solved, and (ii) we can differentiate through convex programs \cite{AmosKolter2017,AgrawalAmosEtAl2019,AgrawalBarrattEtAl2020,AgrawalBarrattEtAl2021,TracyManchester2024}, 
and (iii) with the recent advancements in automatic differentiation tooling and batching, such as JAX \cite{JAX2018}, we can efficiently solve many instances of a QP and differentiate through them in parallel \cite{TracyManchester2024}.

\subsection{Context-dependent responsibility}
Sec.~\ref{subsec:solving diff opt} described the algorithm for finding the responsibility allocation parameter $\gammavec$ using gradient descent. That formulation assumed $\gammavec$ is a fixed vector describing agent responsibilities for an entire dataset.
However, in general, the responsibility allocation is not agent-specific, but rather state- or context-dependent (i.e., $\gammavec$ is a function). 
Given that we are solving Prob.~\ref{prob:responsibility bilevel} using gradient descent, Alg.~\ref{alg:gradient descent} can easily extend to the case where $\gammavec=\mathrm{softmax}(h_\theta(\mathbf{x}, e))$ where $h_\theta:\mathcal{X}^N \times \mathcal{E} \rightarrow[0,1]^N$ is a differentiable function parameterized by $\theta$, such as a neural network, and depends on the state $\mathbf{x}$ and environment information $e$, such as map/road information. The $\mathrm{softmax}$ function ensures the $\gammavec$ constraints are satisfied.
As such, we can perform gradient descent on the parameters $\theta$ instead.

\begin{figure*}[t]
\begin{subfigure}{.32\textwidth}
  \centering
  \includegraphics[width=\linewidth]{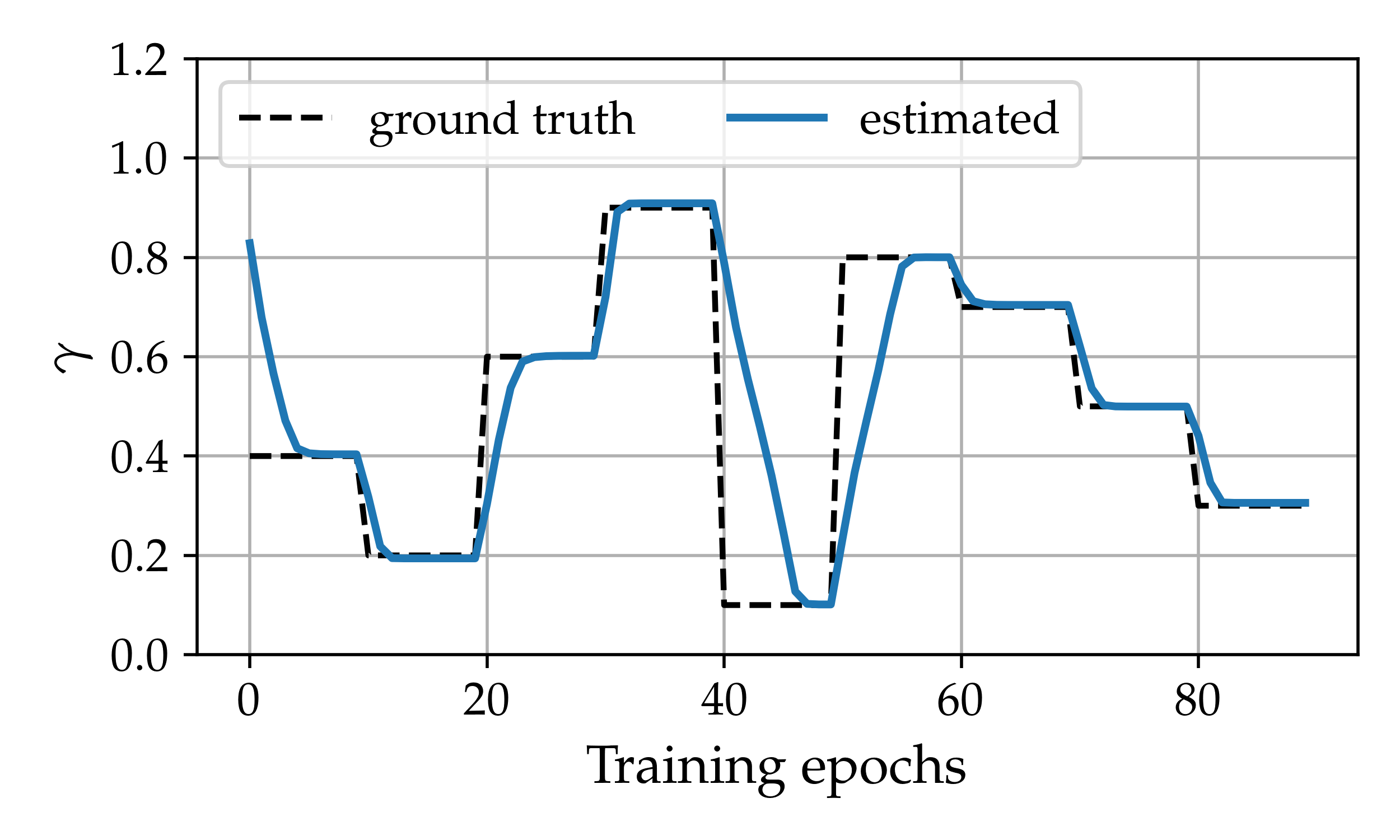}
  \vspace{-7mm}
  \caption{Time-varying $\gamma$ for a 2-agent system.}
  \label{fig:2-agent}
\end{subfigure}%
\begin{subfigure}{.32\textwidth}
  \centering
  \includegraphics[width=\linewidth]{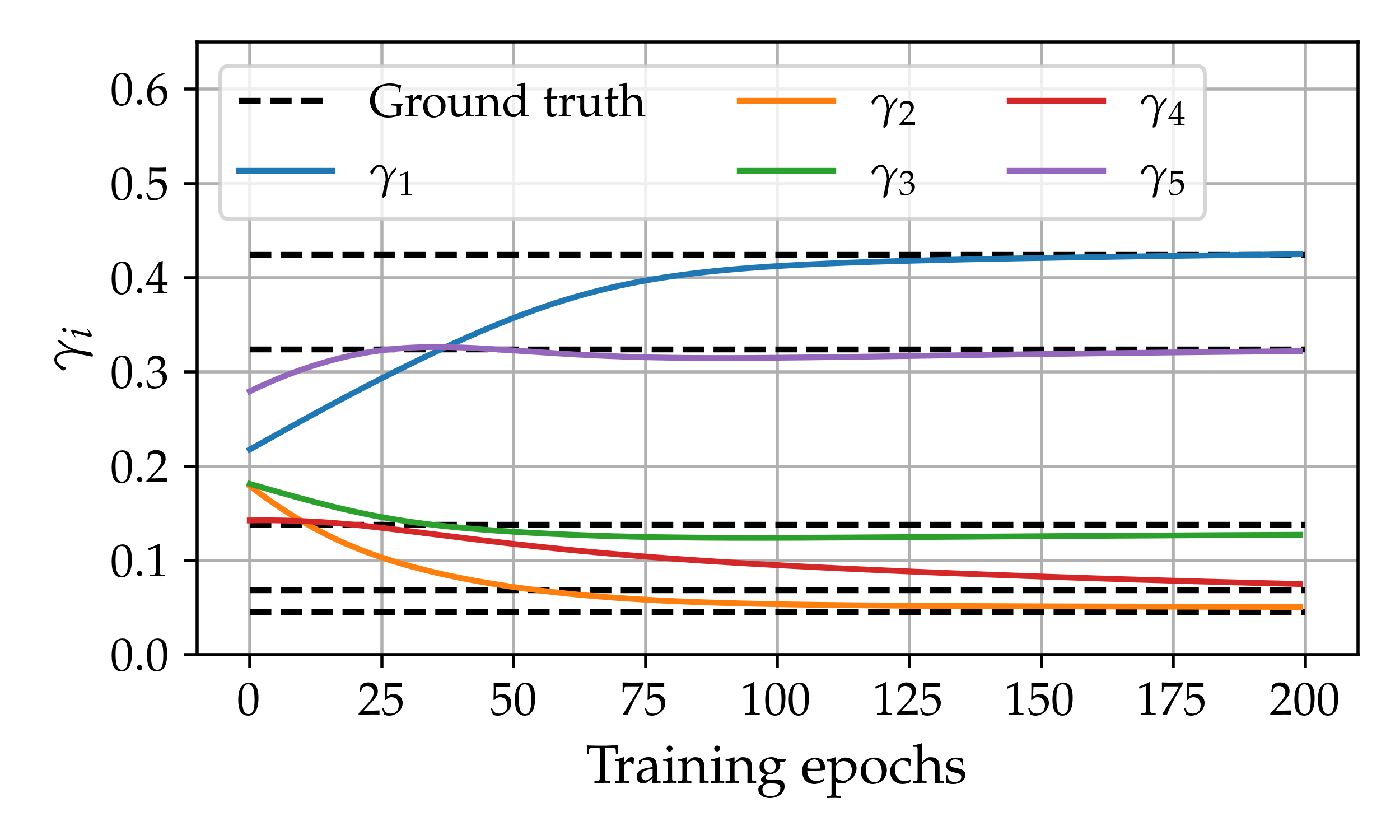}
  \vspace{-7mm}
  \caption{$\gamma$ for each agent in a 6-agent system.}
  \label{fig:6-agent}
\end{subfigure}
\begin{subfigure}{.32\textwidth}
  \centering
  \includegraphics[width=\linewidth]{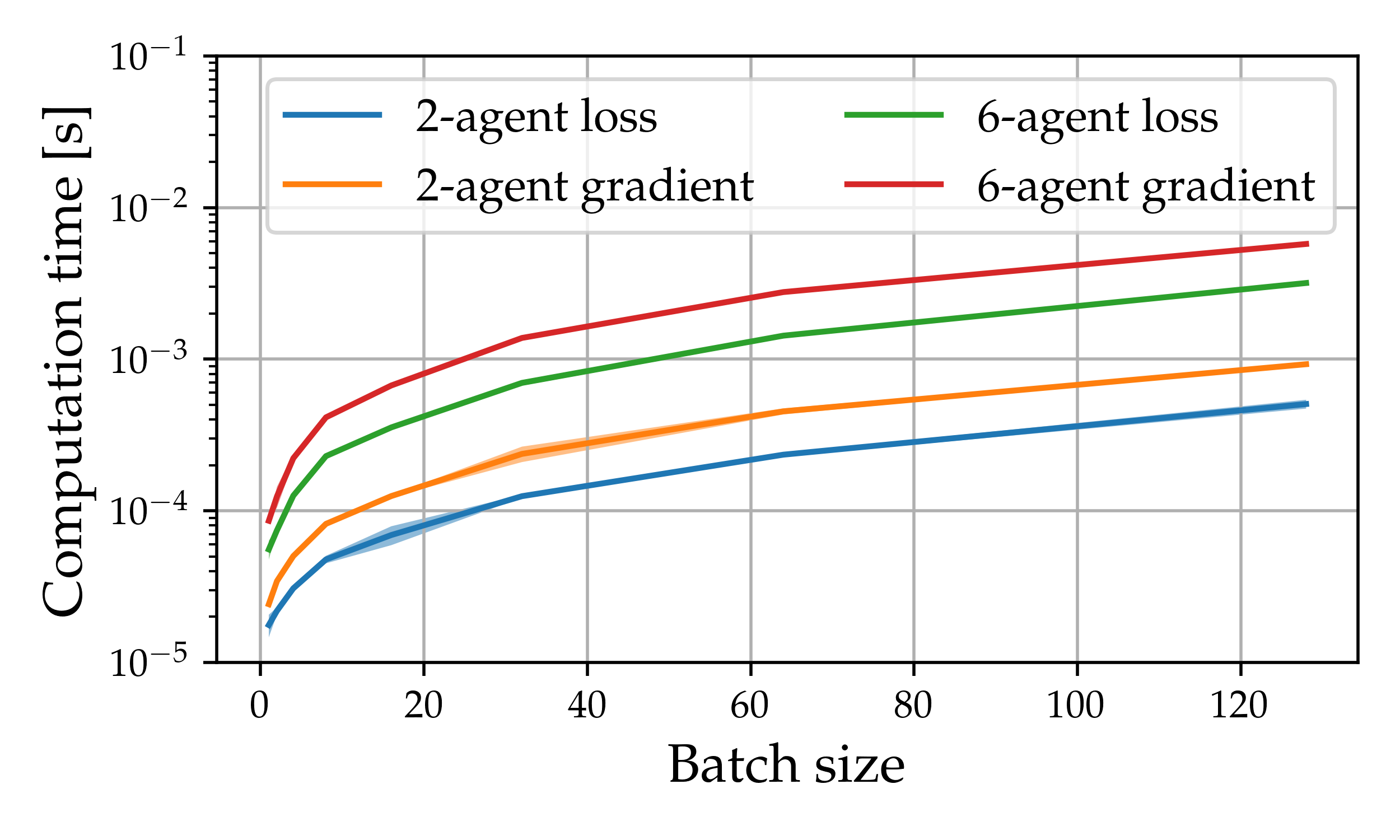}
  \vspace{-7mm}
  \caption{Computation time as a function of batch size.}
  \label{fig:synthetic computation time}
\end{subfigure}
\caption{We generate ground truth $\gamma$ values to create synthetic $u_{1:N}$ via Prob. \ref{prob:responsible multiagent CBF safety filter}, then learn the original $\gamma$ values via Prob. \ref{prob:responsibility bilevel}. After a few epochs, the estimated responsibility allocation value $\gamma$ converges to the ground truth. Computation time scales linearly with batch size.}
\label{fig:synthetic data results}
\vspace{-5mm}
\end{figure*}

\subsection{Enforcing symmetry in responsibility allocation}
\label{subsec:symmetry}

In learning state and/or context-dependent responsibility allocations, there may be some symmetries we would like to exploit. 
In particular, the assigned number ordering of agents (i.e., who is labeled Agent 1, Agent 2,..., Agent $N$) should not matter. 
We introduce a \textit{symmetry responsibility}.
\begin{definition}[Symmetric responsibility]
    For an $N$ agent system with joint state space $\mathcal{X}^N$, suppose some assignment ordering $(1, 2, ..., N)$. Let $\gammavec: \mathcal{X}^N\rightarrow[0,1]^N$ be the responsibility allocation function, and $\gamma_i(\mathbf{x})$ (with $\mathbf{x}=[x_1,...,x_N]$) denote the responsibility allocation for agent $i$. Then, we say the system has symmetric responsibility allocation if the following properties hold:
    \begin{align}
    &\gamma_i(\mathbf{x}) = \gamma_i(\sigma(\mathbf{x})) \: \forall \sigma \in \mathcal{P}_{i}, \: \forall i = 1,...,N \label{eq:symmetry}\\
    & \gamma_i(\mathbf{x}) = \gamma_j(\mathbf{x}_{i\leftrightarrow j})  \: \forall i,j=1,...,N\label{eq:invariance}\\
    &\sum_{i=1}^N \gamma_i(\mathbf{x}) = 1 \label{eq:complementary} 
    \end{align}
    where $\mathcal{P}_{i}$ is the set of all permutations (for $N$ agents) that keep index $i$ fixed, and $\mathbf{x}_{i\leftrightarrow j}$ denotes that the entries at indices $i$ and $j$ are swapped.
\end{definition}
The first equation \eqref{eq:symmetry} describes that the responsibility allocation for agent $i$ is invariant when the assignment ordering of other agents were permuted.
The second equation \eqref{eq:invariance} describes that for a given agent, their responsibility allocation should remain the same regardless of their assigned numbering.
The third property \eqref{eq:complementary} ensures that the sum of all agents' responsibilities is equal to one.
We can construct a function satisfying the symmetric responsibility property.

First, let us define $\sigma:\mathcal{X}^N\rightarrow \mathcal{X}^N$ as a permutation function that permutes the input ordering. We further write $\sigma_{i\rightarrow j}$ to denote any permutation that moves index $i$ to $j$.

\begin{proposition}
Let $\widetilde{\phi}:\mathcal{X}^N \rightarrow \mathbb{R}$ be any (differentiable) function, e.g., a neural network. Then we define $\gamma_i$,

{\small
\begin{align}
    \gamma_i(\mathbf{x}) = \frac{\exp(\phi(\sigma_{i\rightarrow 1}(\mathbf{x})))}{\sum_{j=1}^N \exp(\phi(\sigma_{j\rightarrow 1}(\mathbf{x})))}, \quad \phi(\mathbf{x}) = \sum_{\sigma\in\mathcal{P}_{1}} \widetilde{\phi}(\sigma(\mathbf{x}))
    \label{eq:gamma}
\end{align}
}

Then $\gammavec(\mathbf{x}) = [\gamma_1(\mathbf{x}),...,\gamma_N(\mathbf{x})]$ satisfies the symmetric responsibility properties.
\end{proposition}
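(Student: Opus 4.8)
The plan is to first establish a single invariance lemma about the symmetrized potential $\phi$ and then read off all three properties \eqref{eq:symmetry}--\eqref{eq:complementary} as consequences. The sum-to-one property \eqref{eq:complementary} is immediate: by construction $\gamma_i$ is a softmax over the scores $\phi(\sigma_{i\rightarrow 1}(\mathbf{x}))$, so the numerators sum to exactly the shared denominator and $\sum_i \gamma_i(\mathbf{x}) = 1$ with no work.

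The crux is the lemma that $\phi$ is invariant under every permutation fixing index $1$, i.e. $\phi(\tau(\mathbf{x})) = \phi(\mathbf{x})$ for all $\tau \in \mathcal{P}_1$. I would prove this by the standard reindexing argument for sums over a subgroup: writing $\phi(\tau(\mathbf{x})) = \sum_{\sigma\in\mathcal{P}_1}\widetilde{\phi}(\sigma(\tau(\mathbf{x})))$ and using that $\mathcal{P}_1$ (the stabilizer of index $1$) is a subgroup of the symmetric group, so composition with the fixed element $\tau\in\mathcal{P}_1$ merely permutes $\mathcal{P}_1$ and leaves the sum unchanged. Two corollaries then drive everything else. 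First, the score $\phi(\sigma_{i\rightarrow 1}(\mathbf{x}))$ is well defined, independent of the chosen permutation $\sigma_{i\rightarrow 1}$ sending $i$ to position $1$, because any two such choices differ by an element of $\mathcal{P}_1$; consequently the score depends only on which entry $x_i$ is brought to the front and on the (unordered) collection $\{x_k : k\neq i\}$ of remaining entries. Second, the denominator $Z(\mathbf{x}) = \sum_{k=1}^N \exp(\phi(\sigma_{k\rightarrow 1}(\mathbf{x})))$ is a fully symmetric function of the entries of $\mathbf{x}$, hence invariant under every permutation of $\mathbf{x}$, since letting $k$ range over all positions simply sends each entry to the front exactly once, regardless of the starting order.

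With the lemma in hand the remaining two properties reduce to bookkeeping on numerators and denominators. For \eqref{eq:symmetry}, take $\sigma\in\mathcal{P}_i$: it keeps $x_i$ in position $i$ and only rearranges the others, so the numerator of $\gamma_i(\sigma(\mathbf{x}))$ still brings $x_i$ to the front with the same collection of remaining entries and is unchanged by the first corollary, while the denominator is unchanged by the second corollary; hence $\gamma_i(\mathbf{x}) = \gamma_i(\sigma(\mathbf{x}))$. For \eqref{eq:invariance}, observe that in $\mathbf{x}_{i\leftrightarrow j}$ the entry sitting in position $j$ is $x_i$, so $\sigma_{j\rightarrow 1}(\mathbf{x}_{i\leftrightarrow j})$ brings exactly $x_i$ to the front alongside the remaining entries $\{x_k : k\neq i\}$; this matches the numerator of $\gamma_i(\mathbf{x})$ by the first corollary, and since $\mathbf{x}_{i\leftrightarrow j}$ is a permutation of $\mathbf{x}$ the denominators agree by the second corollary, giving $\gamma_i(\mathbf{x}) = \gamma_j(\mathbf{x}_{i\leftrightarrow j})$.

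I expect the main obstacle to be stating and proving the invariance lemma cleanly, in particular pinning down the group-action convention (the composition order in $\sigma(\tau(\mathbf{x}))$) and justifying the well-definedness of $\phi(\sigma_{i\rightarrow 1}(\mathbf{x}))$ across the different representative permutations. Once it is phrased as ``$\phi$ is constant on $\mathcal{P}_1$-orbits, so its value depends only on the front entry together with the multiset of the rest,'' both \eqref{eq:symmetry} and \eqref{eq:invariance} become one-line checks. A minor technical care point is allowing repeated entries in $\mathbf{x}$, which I would sidestep by arguing through an explicit relating permutation in $\mathcal{P}_1$ rather than through uniqueness of the rearrangement.
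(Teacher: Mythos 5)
Your proof is correct and follows essentially the same route as the paper's: numerator invariance under permutations fixing the relevant agent, denominator invariance under all permutations via the same reindexing argument, and sum-to-one directly from the softmax structure. The only difference is one of rigor---you make explicit, via the stabilizer-subgroup lemma, both the well-definedness of $\phi(\sigma_{i\rightarrow 1}(\mathbf{x}))$ across choices of representative permutation and the numerator invariance, which the paper disposes of with ``by construction.''
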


\begin{proof}
First, we prove that $\gamma_i$ satisfies \eqref{eq:symmetry} for all $i$.
By construction, $\exp(\phi(\sigma_{i\rightarrow 1}(\mathbf{x})))$ is constant under permutations from $\mathcal{P}_i$.
We need to show that the denominator of $\gamma_i$ is invariant under any permutation.
For any permutation $\sigma$, denote $\zeta(j)$ as the index that maps to $j$.
We write $\mathbf{x}_{k\rightarrow l}$ to highlight that index $k$ has been moved to index $l$, with other indices potentially also shuffled.
For any $\sigma \in \mathcal{P}$, we have,

\vspace{-3mm}{\small
\begin{align*}
    \sum_{j=1}^N \exp(\phi(\sigma_{j \rightarrow 1}(\sigma(\mathbf{x})))) &= \sum_{j=1}^N \exp(\phi(\sigma_{j \rightarrow 1}(\mathbf{x}_{\zeta(j) \rightarrow j}))) \\
    &= \sum_{j=1}^N \exp(\phi(\sigma_{\zeta(j) \rightarrow 1}(\mathbf{x}))) \\
    &= \underbrace{\sum_{k=1}^N \exp(\phi(\sigma_{k \rightarrow 1}(\mathbf{x})))}_{\text{Denominator of }\gamma_i} 
\end{align*}
}
Leveraging the fact that the denominator is invariant under permutations, property \eqref{eq:invariance} follows,
\begin{align*}
    \gamma_j(\mathbf{x}_{i\leftrightarrow j}) &= \frac{\exp(\phi(\sigma_{j\rightarrow 1}(\mathbf{x}_{i\leftrightarrow j})))}{\sum_{k=1}^N \exp(\phi(\sigma_{k\rightarrow 1}(\mathbf{x}_{i\leftrightarrow j})))}\\
    &= \frac{\exp(\phi(\sigma_{i\rightarrow 1}(\mathbf{x})))}{\sum_{k=1}^N \exp(\phi(\sigma_{k\rightarrow 1}(\mathbf{x})))}\\
    &= \gamma_i(\mathbf{x}).
\end{align*}

Property \eqref{eq:complementary} is met via use of the $\mathrm{softmax}$ function.
\end{proof}

Constructing a symmetric responsibility allocation function amounts to defining a single function $\widetilde{\phi}:\mathcal{X}^N \rightarrow \mathbb{R}$ and applying all possible $N-1$ permutations on it. As such, this construction can become prohibitively expensive as $N$ increases. In this work, we study cases with small $N$ values for which the computation is tractable, and defer investigating scalable approaches for large $N$ in future work.

\begin{figure}[t]
    \centering
    \includegraphics[width=1.0\linewidth]{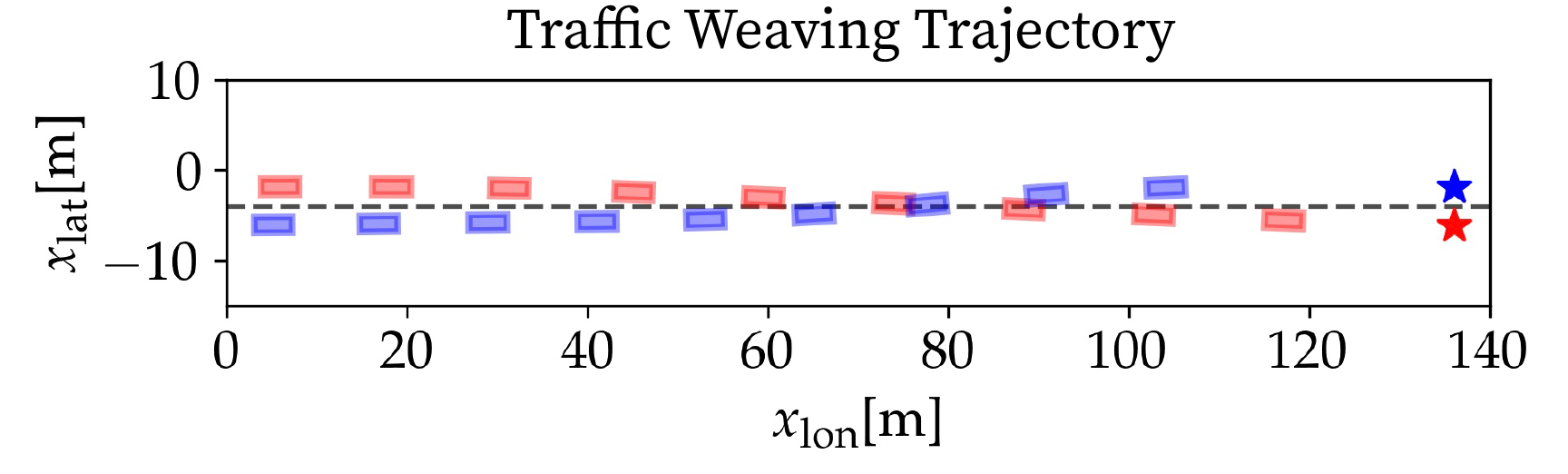}
    \caption{An example trajectory from the traffic weaving dataset. Two cars (red and blue rectangles) start in adjacent lanes, and need to change lanes safely to reach their lateral goals (stars).}
    \label{fig:full-traj}
    \vspace{-5mm}
\end{figure}

\subsection{Symmetry in relative coordinates for two-agent systems}
In the case of a two-agent setting, it may be more convenient to express the system in terms of \textit{relative dynamics} rather than concatenating the states. 
Let $\mathbf{r} = (r_1,...,r_n) \in \mathcal{R}$ denote the relative coordinate.
The state space formulation from Sec.~\ref{subsec:symmetry} must be transferred to the relative coordinate frame.
Swapping the assignment of agents is equivalent to negating the relative coordinates. 
Thus we have $\gamma_1(\mathbf{r}) + \gamma_1(-\mathbf{r}) = 1$. Essentially, rather than requiring invariance under permutations, we require invariance under negation.

We can similarly construct a function satisfying this complementary symmetry for relative coordinates.
Let $\phi: \mathcal{R} \rightarrow \mathbb{R}$ be any function, e.g., neural network. Then,
\begin{align}
    \gamma_1(\mathbf{r}) = \frac{1}{2}(1 + \tanh( \phi(\mathbf{r}) - \phi(-\mathbf{r}))
    \label{eq:relative gamma}
\end{align}
satisfies the analogous symmetry conditions.

\begin{figure*}[t]
    \centering
    \includegraphics[width=\textwidth]{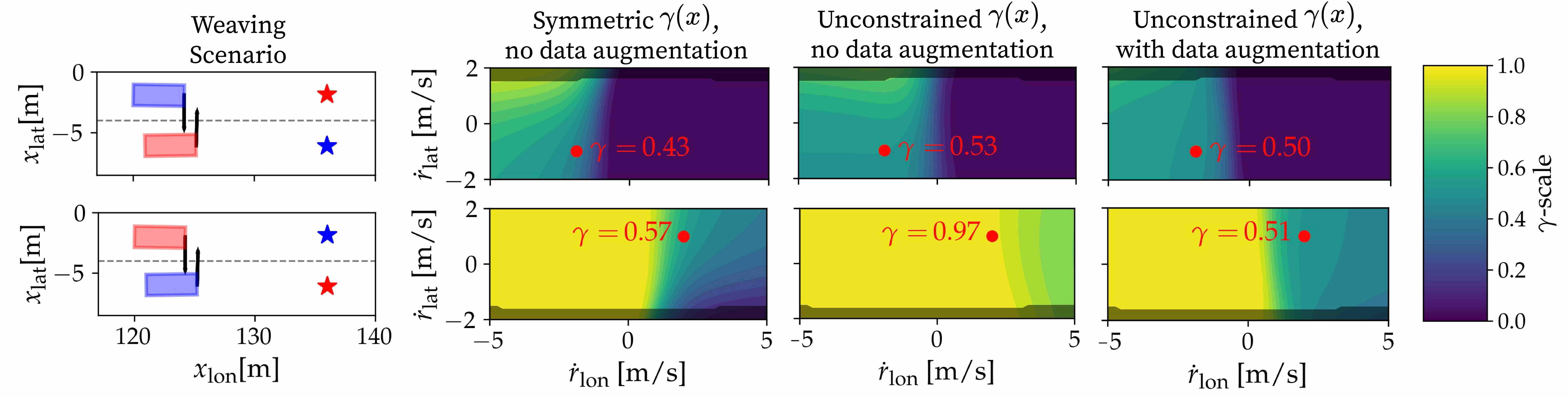}
    \caption{Comparison of $\gamma$ landscapes for the red car using symmetric and unconstrained models. In the scenario, both cars want to change lanes (stars denote each agent's desired goal). The car in the lower lane is moving faster. The bottom row is the same scenario but with the agents swapped. The black arrows in the left plot represent each agent's desired control vector, and the red point in the contour plots corresponds to the depicted scenarios. The shaded regions denote relative states for which the CBF filter is not active. 
    }
    \vspace{-4mm}
    \label{fig:symmetry-contours}
\end{figure*}

\begin{figure}
    \centering
    \includegraphics[width=1.0\linewidth]{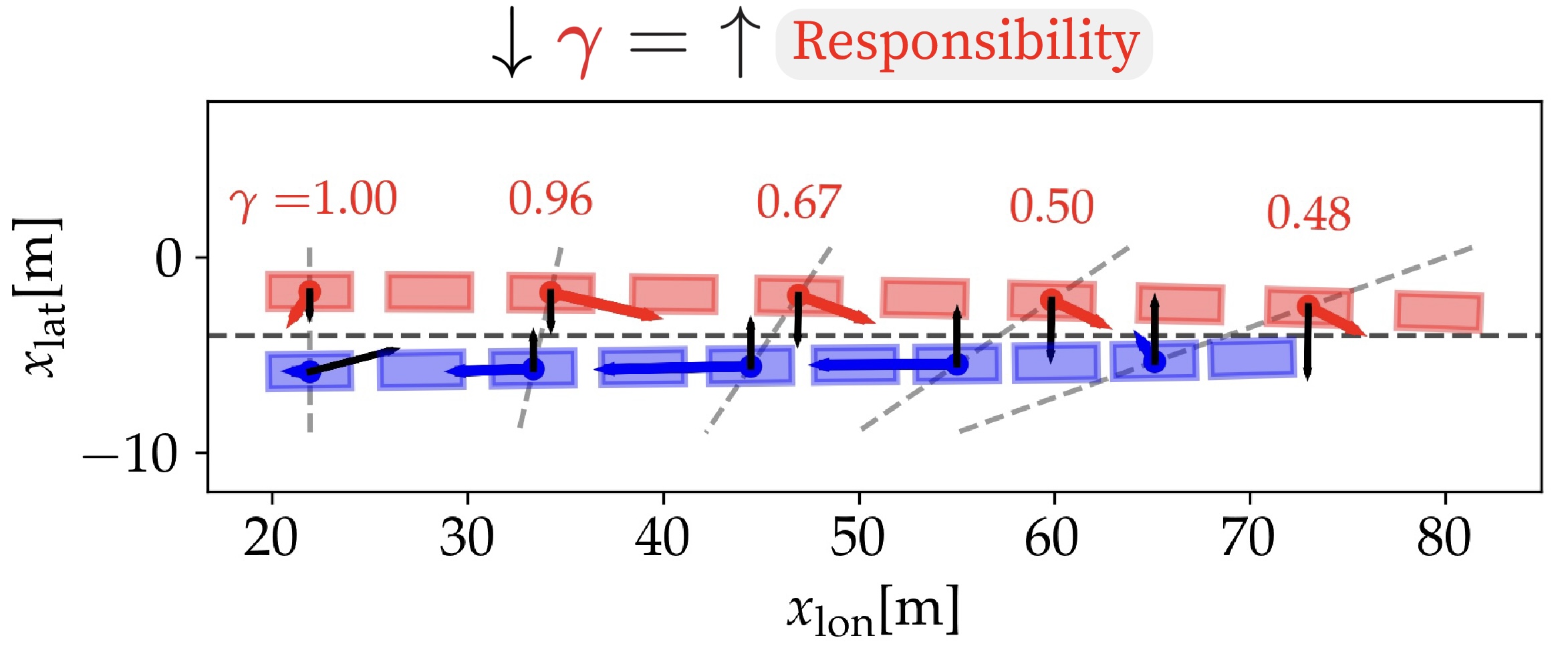}
    \caption{Learned responsibility allocations from a single traffic weaving trajectory. The shown $\gamma$ values correspond to the red agent, and the black and colored arrows represent each agent's desired and actual control, respectively (magnitudes enlarged for visual clarity).}
    \label{fig:gamma-trajectory}
    \vspace{-5mm}
\end{figure}

\section{Experiments on synthetic dataset}
\label{sec:synthetic data}

We evaluate our approach on a synthetically generated dataset to verify that we can recover the ground truth values. We consider a 2-agent system with 1D single integrator dynamics (see Example~\ref{eg: 2-agent 1D single integrator}) and a 6-agent system with 2D double integrator dynamics. We use a CBF that keeps the distance between the (closest) agents at least 1 unit apart.
Note that the double integrator is a relative degree two system, so we employ high order CBFs. As mentioned in Remark~\ref{re:high order cbf}, the high order CBF inequality requires computing additional terms for the CBF inequality constraint, but otherwise, the overall approach remains the same.

For each system, we randomly sample 128 state and desired control pairs and then, for some choice of $\gamma$, we apply the responsibility-aware CBF filter described in Prob.~\ref{prob:responsible multiagent CBF safety filter}, and then add zero-mean, Gaussian noise (variance of $0.1)$ to the solutions.
Then we execute Alg.~\ref{alg:gradient descent} to learn $\gamma$.
As shown in Figs.~\ref{fig:2-agent} and \ref{fig:6-agent}, for a random initial guess, our estimated responsibility allocation $\gamma$ quickly converges to the ground truth value, even if the ground truth is time-varying (see Fig.~\ref{fig:2-agent}). We use a mini-batch size of 8 and a step size of $\delta=0.005$ and $\delta=0.05$ for the 2-agent and 6-agent problems, respectively. 
Additionally, Fig.~\ref{fig:synthetic computation time} presents computation times for calculating the loss and gradient for Prob.~\ref{prob:responsibility bilevel}. We see that our approach has great potential for real-time applications, such as estimating responsibility allocation online, which would only require small batches of data that are streaming in real-time.
The experiments were performed on a MacBook Pro with a M2 processor.

\section{Experiments on traffic-weaving interaction}

We demonstrate the efficacy of our responsibility allocation learning approach on a traffic-weaving interaction dataset, containing trajectories of two human drivers (in a driving simulator) quickly swapping lanes within a short distance, reminiscent of merging at highway on/off ramps \cite{SchmerlingLeungEtAl2018}. An example is shown in Fig. \ref{fig:full-traj}.
This traffic-weaving dataset presents an interesting case study as it contains diverse driving styles from different drivers with no explicit driving rules describing how drivers yield to each other, in contrast to, say, entering a roundabout where cars in the roundabout have the right of way.

\begin{figure*}
    \centering
    \includegraphics[width=\textwidth]{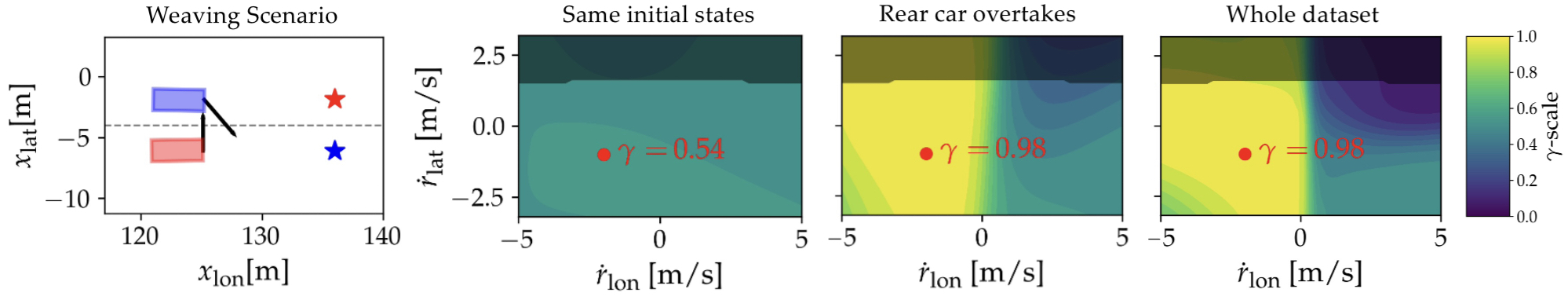}
    \caption{Comparison of $\gamma$ landscapes when learned using different subsets of the traffic weaving dataset (D2, D3, D4; see Sec. \ref{sec:dataset prep} for details). In this scenario, red is moving faster than blue, with the black arrows representing their desired goal-reaching controls. 
    }
    \label{fig:subset-contours}
    \vspace{-5mm}
    
\end{figure*}

\subsection{Experimental set up}

\subsubsection{Traffic-weaving dynamics and CBF construction}
We model each agent with double integrator dynamics, and consider their relative state, $\mathbf{r}=[r_\lon, r_\lat, \dot{r}_\lon, \dot{r}_\lat]=\mathbf{x}_2-\mathbf{x}_1$ where $\mathbf{x}_i=[x_{\lon}, x_{\lat}, \dot{x}_{\lon}, \dot{x}_{\lat}]^T$ is the state of agent $i$, and $\lon$ and $\lat$ denotes the longitudinal and lateral directions respectively. Using relative states reduces the dynamical system's dimensionality.
With these relative double integrator dynamics, we select a CBF $b(\mathbf{r}) = \frac{r_\lon^2}{a_1^2} + \frac{r_\lat^2}{a_2^2} - 1$. We set $a_1=9.22$ and $a_2=1.76$ based on the cars' sizes. Due to the double integrator being relative degree two, we employ a high-order CBF (see Remark \ref{re:high order cbf}).

\subsubsection{Desired control.}
Our proposed approach requires knowledge of each agent's \textit{desired control}.
Unfortunately, this information is not typically known.
In this paper, we use a desired control scheme handcrafted from simple heuristics and empirical observation, leaving exploration of learning agents' desired control policies for future work.

For an agent at state $\mathbf{x}$, its desired lateral control is,
\begin{equation}
    \pi^\desire_\lat(\mathbf{x};\ x_\lat^\desire) = -(x_\lon + \zeta)\beta\tanh(\alpha(x_\lat - x_\lat^\desire)),
    \label{eq:u_des_lateral}
\end{equation}
where $\zeta$, $\beta$ and $\alpha$ are tunable parameters, set as: $\zeta=4.7$, $\beta=0.022$, $\alpha=0.8$, and $x_\lat^\desire$ is the agent's desired lateral position, i.e., the center of the other lane.
Intuitively, the agent's lateral control/steering will get more drastic the longer they take to change lanes.

An agent's desired longitudinal control is,
\begin{equation}
    \pi^\desire_\lon(\mathbf{r} =\mathbf{x}^\prime - \mathbf{x}) =
    \begin{cases} 
          0 & r_\lon > 0 \\
          \frac{-\kappa}{2}(\tanh(r_\lon \dot{r}_\lon) - 1) & r_\lon \leq 0
    \end{cases}
    \label{eq:u_des_long}
\end{equation}
where an agent, if ahead of the other agent, is encouraged to speed up to achieve a larger relative longitudinal distance up to some threshold. Otherwise, if the agent is behind the other agent, it will maintain speed. The parameter $\kappa$ influences the limits of the desired control, which we set to $\kappa=2$.

\subsubsection{Dataset preparation.}
\label{sec:dataset prep}
As mentioned previously, the traffic-weaving dataset consists of diverse behaviors owing to the various driving styles of the driving participants.
As such, we consider four different subsets of the dataset to perform our analysis.
\textbf{(D1)} Single traffic-weaving trajectory.
\textbf{(D2)} Trajectories where both cars start side-by-side at equal initial longitudinal speed.
\textbf{(D3)} Trajectories where one car starting behind but traveling faster overtakes the other car. 
\textbf{(D4)} Whole dataset.

Before training, we consider two types of data augmentation: \textbf{(A1)} mirroring the agents' lateral states and controls across the lane divider ($x_{\mathrm{lat}}, \dot{x}_{\mathrm{lat}}, u_{\mathrm{lat}}, u^{\mathrm{des}}_{\mathrm{lat}} \rightarrow -x_{\mathrm{lat}}, -\dot{x}_{\mathrm{lat}}, -u_{\mathrm{lat}}, -u^{\mathrm{des}}_{\mathrm{lat}}$), and \textbf{(A2)} swapping the states of the agents ($\mathbf{r} \rightarrow -\mathbf{r}$). We apply augmentation A1 to every data subset to remove bias in which lane an agent starts. 
We apply augmentation A2 on dataset D1 to verify the benefits of a symmetric $\gamma$-function; we expand on this in Sec.~\ref{subsec:single-traj-train}. Note: A2 is the ``data augmentation'' referred to in Fig. \ref{fig:symmetry-contours}.

\subsubsection{Model parameterization and training.}
We parameterize the responsibility allocation function as a neural network $\phi$ that takes in the relative state $\mathbf{r}$ and outputs the responsibility allocation. Our $\phi(\mathbf{r})$ function is a small multi-layer perception with three linear hidden layers, hidden layer size of sixteen, and $\tanh$ activation functions, whose final output is passed through the function in (\ref{eq:relative gamma}) to enforce symmetry. Our setup used JAX \cite{JAX2018} and Equinox \cite{KidgerGarcia2021}. We trained on each dataset for $3000$ epochs with batch sizes of $16$ for the single weaving trajectory and $256$ for the other datasets, and used the ADAM optimizer \cite{KingmaBa2015} with a learning rate of $10^{-3}$. Our loss function $\Delta$ from Prob. \ref{prob:responsibility bilevel} is the Huber loss, and the regularization/slack coefficients $\beta_1$ and $\beta_2$ from Prob. \ref{prob:CBF safety filter} are tuned to $0.1$ and $600$ respectively.

\subsection{Training on a single trajectory (D1)}
\label{subsec:single-traj-train}
We first apply our learning method to the D1 dataset for a sanity check and validate our intuition.
In this single traffic-weaving trajectory (similar to Fig. \ref{fig:full-traj}), the car in the upper lane starts $2$m/s faster and $3$m behind the lower car, before overtaking the lower car and switching lanes first. 

\subsubsection{Benefits of using a symmetric model}
Fig.~\ref{fig:symmetry-contours} shows the learned responsibility allocation for the red car ${\color{red}\gamma_\mathrm{red}}$ (and the blue car's responsibility allocation is ${\color{blue}\gamma_\mathrm{blue}}=1 - {\color{red}\gamma_\mathrm{red}}$). The top and bottom rows are in the same scenario except for the cars swap places. This means we expect ${\color{red}\gamma_\mathrm{red}^\text{top scene}} = {\color{blue}\gamma_\mathrm{blue}^\text{bottom scene}}.$ This is precisely the symmetric responsibility property, and we see that, as expected, a symmetric model satisfies this property exactly whereas the unconstrained model does not.

Additionally, note that the results with the symmetric model require no augmenting of swapped players' states, and we see that it achieves very similar results to the unconstrained model using this data augmentation. The unconstrained model without data augmentation has poor results when the cars are in a configuration unseen in the original trajectory.
These results highlight the data efficiency benefits of using a symmetric model---data augmentation may be unnecessary, a benefit when working with more complex scenarios where data augmentation is less trivial.
Moving forward, we use a symmetric responsibility allocation model.

\subsubsection{Interpreting $\gamma$ over a trajectory}
Fig. \ref{fig:gamma-trajectory} shows how the value of $\gamma$ changes during the trajectory as the two cars swap lanes. The evolution of $\gamma$ is consistent with the alignment between the agents' actual (colored) control and desired (black) control vectors; initially, the red car has low responsibility as it does not yet need to change lanes. However, after overtaking the blue car, its responsibility increases because it becomes more eager to swap lanes but cannot because of the blue car. The blue car, meanwhile, has to brake early in the trajectory when it wants to move laterally (hence the greater responsibility early on), as it must accommodate the red car overtaking it. This example illustrates how  we can learn intuitive responsibility allocations for a basic traffic weaving trajectory.

\subsection{Training on data with same initial conditions (D2)} 
We apply our learning method to the D2 dataset to see what happens when the dataset contains diverse and contradictory behaviors.
As the initial conditions place the cars on equal footing at the start, there is ambiguity as to who passes whom, resulting in multimodal behavior within the dataset.
Learning responsibility allocations in this setting is substantially more difficult; as shown by Fig. \ref{fig:subset-contours}\textcolor{red}: \textit{Same initial states}. The learned responsibility allocation is constant---regardless of their relative state, both agents are predicted to be equally responsible. This performance drop under multimodal interactions highlights a limitation in our approach, necessitating future work on a \textit{probabilistic} extension.

\subsection{Training on data when rear, faster car overtakes (D3)}
We apply our learning method to the D3 dataset to investigate if it can successfully capture the trend that the slower car yields to the faster car.
In Fig. \ref{fig:subset-contours}\textcolor{red}: \textit{Rear car overtakes}, we see that with the same longitudinal position, the faster car has a much larger $\gamma$ value, meaning they are less responsible for maintaining safety, consistent with behavior in the dataset where the slower car yields to the faster car.

\subsection{Training on the whole dataset (D4)}
Finally, we apply our learning methods to the entire traffic-weaving dataset (D4).
Fig. \ref{fig:subset-contours}\textcolor{red}: \textit{Whole dataset} shows that it learned a responsibility allocation model similar to the model trained on D3. Despite D4 containing more diverse trajectories, there is a bias towards cars starting behind but traveling faster to overtake rather than yield to the slower but front car \cite{SchmerlingLeungEtAl2018}.
Clearly, our model captures this bias.

\subsection{Discussion}
From the results of our different training scenarios on the traffic weaving dataset, we draw three key takeaways:
\begin{enumerate}
    \item Our responsibility model and inference technique was able to efficiently perform on real human trajectory data, assigning interpretable quantities to observed behaviors. This affirms that we can transfer our definition of responsibility allocation from theory to practice. 
    \item Our approach learned nontrivial responsibility allocations on data with consistent bias (e.g., a significant portion of data having the faster car overtake the slower car), but struggled with data containing multimodal behaviors. This suggests a need to investigate probabilistic extensions to capture multimodality.
    \item Identifying the ``best'' desired control policy is non-trivial. While our handcrafted nonlinear control policy for the traffic weaving data lets us learn intuitive responsibility allocations, how we can do better is unclear; investigating principled approaches to learning desired policies from data is a logical next step.
\end{enumerate}

\section{Conclusions and future work}
In this paper, we introduced a novel responsibility allocation framework that distills ill-defined social norms underpinning human interactions into an interpretable quantity. 
We presented a mathematical formulation of responsibility rooted in Control Barrier Functions and a computationally efficient approach to learn responsibility allocations from data via differentiable optimization.
We also proposed an approach to learning symmetric responsibility allocations, improving data efficiency. 
We showed the efficacy of our algorithm on synthetic and human interaction data, and results indicate we can gain interpretable insights into the social dynamics of multi-agent interactions.
For future work, we will \textbf{(i)} investigate principled ways to construct desired control policies, e.g. via learned generative models, \textbf{(ii)} develop a probabilistic extension to this framework to account for multimodal interactions, and \textbf{(iii)} examine using responsibility allocations for guiding robot policy construction.

\section*{Acknowledgment}
We thank Hamzah Khan, Xinjie Liu, Dong Ho Lee, Griffin Norris, and Kazuki Mizuta for their feedback and discussion.

\bibliographystyle{IEEEtran}
\bibliography{main,ctrl_papers}

\end{document}